\newtheorem{theorem}{Theorem}
\newtheorem{proposition}{Proposition}
\newcommand{\myref}[1]{(\ref{#1})}
\newcommand\norm[1]{\left\lVert#1\right\rVert}
\title{\LARGE \bf
Attitude Tracking Control for Aerobatic Helicopters: A Geometric Approach}
\author{Nidhish Raj\textsuperscript{1}, Ravi N. Banavar\textsuperscript{2}, Abhishek\textsuperscript{3}, and Mangal Kothari\textsuperscript{3}%
\thanks{\textsuperscript{1} Doctoral student, IIT Kanpur, currently Visiting Researcher, IIT Gandhinagar, India {\tt\small nraj@iitk.ac.in}} 
\thanks{ \textsuperscript{2} Professor, Systems and Control Engineering, IIT Bombay, currently Visiting Professor, Department of Electrical Engineering, IIT Gandhinagar, India {\tt\small ravi.banavar@gmail.com}}
\thanks{ \textsuperscript{3} Assistant Professor, Department of Aerospace Engineering, IIT Kanpur, India {\tt\small \{abhish,mangal\}@iitk.ac.in}}
}
\begin{document}

\maketitle
\thispagestyle{empty}
\pagestyle{empty}

%%%%%%%%%%%%%%%%%%%%%%%%%%%%%%%%%%%%%%%%%%%%%%%%%%%%%%%%%%%%%%%%%%%%%%%%%%%%%%%%
\begin{abstract}

We consider the problem of attitude tracking for small-scale aerobatic helicopters. A small scale helicopter has two subsystems: the fuselage, modeled as a rigid body; and the rotor, modeled as a first order system. Due to the coupling between rotor and fuselage, the complete system does not inherit the structure of a simple mechanical system. The coupled rotor fuselage dynamics is first transformed to rigid body attitude tracking problem with a first order actuator dynamics. The proposed controller is developed using geometric and backstepping control technique. The controller is globally defined on $SO(3)$ and is shown to be locally exponentially stable. The controller is validated in simulation and experiment for a 10 kg class small scale flybarless helicopter by demonstrating aggressive roll attitude tracking.

\end{abstract}

%%%%%%%%%%%%%%%%%%%%%%%%%%%%%%%%%%%%%%%%%%%%%%%%%%%%%%%%%%%%%%%%%%%%%%%%%%%
%%%%%%%%%%%%%%%%%%%%%%%%%%%%%%%%%%%%%%%%%%%%%%%%%%%%%%%%%%%%%%%%%%%%%%%%%%%

\section{Introduction}
Small-scale conventional helicopters with a single main rotor and a tail rotor are capable of performing extreme 3D aerobatic maneuvers \cite{gavrilets2001aggressive}, \cite{abbeel2010autonomous}, \cite{gerig2008modeling}. Such maneuvers involve large angle rotation with high angular velocity, inverted flight, pirouette etc. This necessitates a tracking controller which is globally defined and is capable of achieving  fast rotational maneuvers.

 The attitude tracking problem of a helicopter is quite different from that of a rigid body. The control moments generated by the rotor excite the rigid body dynamics of the fuselage which in-turn affects the rotor loads and its dynamics causing nonlinear coupling. The key differences between the rigid body tracking problem and attitude tracking of a helicopter are the following: 1) the presence of large aerodynamic damping in the rotational dynamics; and 2) the required control moment for tracking cannot be applied instantaneously due to the rotor blade dynamics. The control moments are produced by the rotor subsystem which has a first order dynamics \cite{mettler2013identification}. The importance of including rotor dynamics in controller design for large scale helicopters has been extensively studied in the literature \cite{hall1973inclusion}, \cite{takahashi1994h}, \cite{ingle1994effects}, \cite{panza2014rotor}. Hall and Bryson \cite{hall1973inclusion} have shown the importance of rotor state feedback in achieving tight attitude control for large scale helicopters, while Takahashi \cite{takahashi1994h} compares $H_\infty$ attitude controller design for cases with and without rotor state feedback.
 
In this article, we propose an attitude tracking controller for small scale helicopters using notions based on geometric control and backstepping control design approaches. We show that the controller is defined globally on the attitude manifold, $SO(3)$, achieves local exponential stability and is capable of performing rapid rotational maneuvers. Previous attempts to small-scale helicopter attitude control are based on attitude parametrization such as Euler angles, which suffer from  singularity issues, or quaternions which have ambiguity in representation. The proposed controller being defined on $SO(3)$ is free of these issues. In \cite{ahmed2009flight}, an adaptive backstepping stabilizing controller using Euler angles for a small scale helicopter with servo and rotor dynamics is considered. Tang et al. \cite{tang2015attitude} explicitly consider the rotor dynamics and design stabilizing controller based on sliding mode technique using Euler angles and hence confined to small angle maneuvers.

The paper is organized as follows: Section 2 describes the rotor-fuselage dynamics of a small-scale helicopter. Section 3 presents an attitude tracking controller for a rigid body and later presents the proposed controller for helicopter rotor-fuselage dynamics. The efficacy of the proposed design is demonstrated through numerical simulation in Section 4 and it's experimentally validation is given in Section 5.

%%%%%%%%%%%%%%%%%%%%%%%%%%%%%%%%%%%%%%%%%%%%%%%%%%%%%%%%%%%%%%%%%%%%%%%%%%%
%%%%%%%%%%%%%%%%%%%%%%%%%%%%%%%%%%%%%%%%%%%%%%%%%%%%%%%%%%%%%%%%%%%%%%%%%%%

\section{Helicopter Model}

Unlike quadrotors, a helicopter modeled as a rigid body does not capture all the dynamics required for high bandwidth attitude control purposes. A coupled rotor-fuselage model of a small-scale helicopter is considered \cite{mettler2013identification}. The fuselage is modeled as a rigid body and the rotor as a first order system which generates the required control moments. The inclusion of the rotor model is crucial as it introduces aerodynamic damping in the system.

\begin{figure}[!htbp]
\begin{center}
\includegraphics[width=1.0\linewidth]{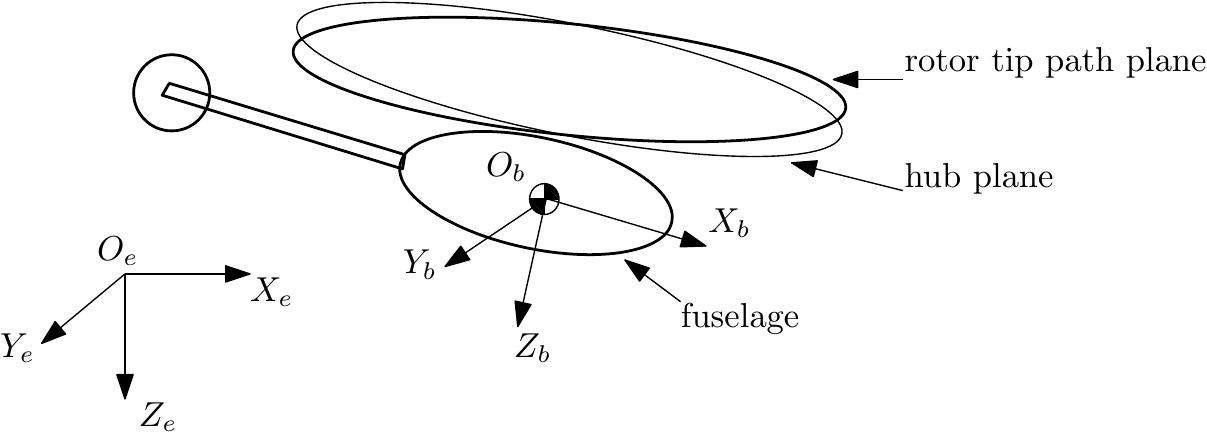}
\caption{Fuselage and tip path plane.}
\label{fig:heli_model}
\end{center}
\end{figure}

The rotational equations of motion of the fuselage are given by,
\begin{equation} \label{eq:rigid_body}
\begin{split}
\dot{R} = R\hat{\omega}, \\
J\dot{\omega} + \omega\times J\omega = M,
\end{split}
\end{equation}
where $R \in SO(3)$ is the rotation matrix which transforms vectors from the body fixed frame of reference, $(O_b,X_b,Y_b,Z_b)$, to a spatial frame of reference, $(O_e,X_e,Y_e,Z_e)$, $M = [M_x,M_y,M_z]$ is the external moment acting on the fuselage and $J$ is the body moment of inertia of the fuselage, $\omega = [\omega_x,\omega_y,\omega_z]$ is the angular velocity of the body frame with respect to the spatial frame expressed in the body frame. The hat operator, $\hat{\cdot}$, is a Lie algebra isomorphism from $\mathbb{R}^3$ to $\mathfrak{so(3)}$ given by
\begin{equation*}
\hat{\omega} = \begin{bmatrix}
0 & -\omega_z & \omega_y \\
\omega_z & 0 & -\omega_x \\
-\omega_y & \omega_x & 0 
\end{bmatrix}.
\end{equation*}

We consider here first order tip path plane (TPP) equations for the rotor as it captures the required dynamics for gross movement of fuselage \cite{mettler2013identification}. The rotor dynamics equations are given by

\begin{equation} \label{eq:flap_eq}
\begin{split}
\dot{a} = -\omega_y - a/\tau_m + \theta_a/\tau_m, \\
\dot{b} = -\omega_x - b/\tau_m + \theta_b/\tau_m, 
\end{split}
\end{equation}
where $a$ and $b$ are respectively the longitudinal and lateral tilt of the rotor disc with respect to the hub plane as shown in Fig. \ref{fig:rotor_fuselage}, $\tau_m$ is the main rotor time constant and $\theta_a$ and $\theta_b$ are the control inputs to the rotor subsystem. They are respectively the lateral and longitudinal cyclic blade pitch angles actuated by servos through a swashplate mechanism.

\begin{figure}[!htbp]
\begin{center}
\includegraphics[width=0.8\linewidth]{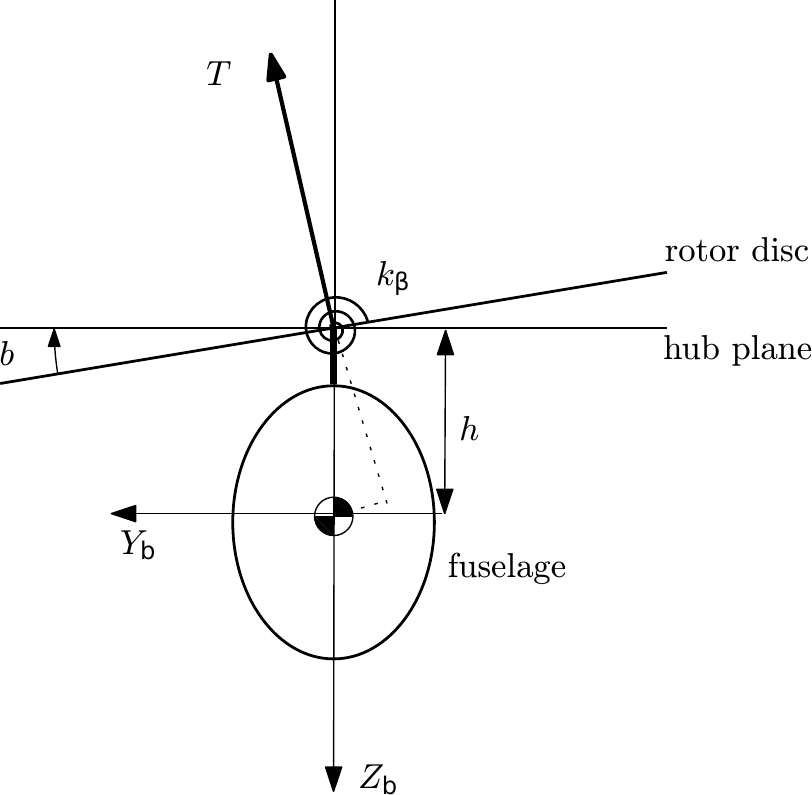}
\caption{Rotor-fuselage coupling.}
\label{fig:rotor_fuselage}
\end{center}
\end{figure}

The coupling of the rotor and fuselage occurs through the rotor hub. The rolling moment, $M_x$ and pitching moment $M_y$, acting on the fuselage due to the rotor flapping consists of two components -- due to tilting of the thrust vector, $T$, and due to the rotor hub stiffness, $k_\beta$,
\begin{equation*}
\begin{split}
M_x &= (hT + k_\beta)b, \\
 M_y &= (hT + k_\beta)a. 
\end{split}
\end{equation*}
Here $h$ is the distance of rotor hub from the center of mass. For near-hover condition the thrust can be considered constant which gives the equivalent hub stiffness, $K_\beta = (hT + k_\beta)$. The control moment about yaw axis, $M_z$, is applied through tail rotor which has a much faster aerodynamic response than the main rotor flap dynamics. The tail rotor along with the actuating servo is approximated as a first order system with $\tau_t$ as tail rotor time constant
\begin{equation*}
\dot{M}_z = -M_z/\tau_t + K_t\theta_t/\tau_t.
\end{equation*}

Since angular velocity of the fuselage is available for feedback, the main rotor dynamics \myref{eq:flap_eq} and tail rotor dynamics can be written in terms of control moments and a new control input $u$ as
\begin{equation} \label{eq:actuator}
\dot{M} = -AM + u,
\end{equation}
where A is a positive definite matrix defined as
\begin{equation*}
A \triangleq \begin{bmatrix}
1/\tau_m & 0 & 0 \\
0 & 1/\tau_m & 0 \\
0 & 0 & 1/\tau_t 
\end{bmatrix}
\end{equation*}
and the new control input $u$ is defined as
\begin{equation*}
u \triangleq \begin{bmatrix}
K_\beta(\theta_b/\tau_m - \omega_x) \\
K_\beta(\theta_a/\tau_m - \omega_y) \\
K_t\theta_t/\tau_t
\end{bmatrix}.
\end{equation*}

The combined rotor-fuselage dynamics given by \myref{eq:rigid_body} and \myref{eq:actuator} can be seen as a simple mechanical system driven by a force which has first order dynamics. The overall dynamical system does not have the form of a simple mechanical system \cite{bullo2004geometric} as the actuator dynamics is first order.

%%%%%%%%%%%%%%%%%%%%%%%%%%%%%%%%%%%%%%%%%%%%%%%%%%%%%%%%%%%%%%%%%%%%%%%%%%%
%%%%%%%%%%%%%%%%%%%%%%%%%%%%%%%%%%%%%%%%%%%%%%%%%%%%%%%%%%%%%%%%%%%%%%%%%%%

\section{Attitude Tracking Controller}

Given a twice differentiable attitude reference command $(R_d(t),\omega_d(t),\dot{\omega}_d(t))$, the objective is to design an attitude tracking controller for the helicopter. The combined rotor-fuselage dynamics is reproduced here for convenience

\begin{equation} \label{eq:fuselage}
\text{Fuselage}
\begin{cases}
& \dot{R} = R\hat{\omega}, \\
& J\dot{\omega} + \omega\times J\omega = M, 
\end{cases}
\end{equation}
\begin{equation} \label{eq:rotor}
\text{Rotor}
\begin{cases}
& \dot{M} = -AM + u.
\end{cases}
\end{equation}

First we design an attitude tracking controller for the fuselage, modeled as rigid body \myref{eq:fuselage}, using geometric control theory as is described in \cite{bullo2004geometric} and \cite{koditschek1989application}. Next, we use the results from this part to prove local exponential stability of the proposed helicopter tracking controller. The rigid body tracking controller has proportional derivative plus feed-forward components. The proportional action is derived from a tracking error function $\psi : SO(3) \times SO(3) \to \mathbb{R}$ which is defined in terms of the configuration error function $\psi_c : SO(3) \to \mathbb{R}$ as
\begin{equation*}
\psi(R,R_d) = \psi_c(R_d^TR) \coloneqq \frac{1}{2} tr[I - R_d^TR].
\end{equation*}
This is possible on a Lie group since the tracking problem can be reduced to a configuration stabilization problem about the identity because of the possibility of defining error between any two configurations using the group operation \cite{maithripala2006almost}. $\psi_c$ has a single critical point within the sub level set about the identity I, $\psi_c^{-1}(\leq 2,I) = \{R \in SO(3) |  \psi_c(R)<2 \}$. This sub level set represents the set of all rotations which are less than $\pi$ radians from the identity $I$. From the above function the attitude error vector, $e_R$, is defined as the differential of $\psi$ with respect to first argument,
\begin{equation*}
d_1\psi(R,R_d)\cdot R\hat{\omega}  = \frac{1}{2}[R_d^TR-R^TR_d]^\vee\cdot\omega,
\end{equation*}
\begin{equation*}
e_R = \frac{1}{2}[R_d^TR-R^TR_d]^\vee,
\end{equation*}
where $(\cdot)^\vee : \mathfrak{so(3)}\rightarrow\mathbb{R}^3$ is the inverse of hat map $\hat{(\cdot)}$. Since the velocities at reference and current configurations are in different tangent spaces they cannot be directly compared. Therefore $\dot{R}_d$ is transported to the tangent space at $R$ by the tangent map of the left action of $R^TR_d$. Thus, the tracking error for angular velocity is given by
\begin{equation*}
e_\omega = \omega - R^TR_d\omega_d.
\end{equation*}
The total derivative of $e_R$ is
\begin{equation*}
\begin{split}
\dot{e}_R &= \frac{1}{2}[-\hat{\omega}_d R_d^TR + R_d^TR\hat{\omega} + \hat{\omega}R^TR_d - R^TR_d\hat{\omega}_d]^\vee \\
&= \frac{1}{2}[R_d^TR(\hat{\omega} - R^TR_d\hat{\omega}_dR_d^TR) + (\hat{\omega} - R^TR_d\hat{\omega}_dR_d^TR)R^TR_d]^\vee \\
&= \frac{1}{2}[R_d^TR\hat{e}_\omega + \hat{e}_\omega R^TR_d]^\vee \\
&= B(R_d^TR)e_\omega,
\end{split}
\end{equation*}
where $B(R_d^TR) = \frac{1}{2}[tr(R^TR_d)I - R^TR_d]$ and $B(R_d^TR) \, <\, 1$ for all $R_d^TR \, \in \, SO(3)$. Here we have used the fact that $ [R\hat{x}R^T]^\vee = Rx$ for all $R \in SO(3)$ and $x \in \mathbb{R}^3$.
The time derivative of $e_\omega$ is
\begin{equation*}
\dot{e}_\omega = \dot{\omega} - R^TR_d\dot{\omega}_d + \hat{\omega}R^TR_d\omega_d.
\end{equation*}
The total derivative of $\psi$ is 
\begin{equation*}
\begin{split}
\frac{d\psi}{dt} &= -\frac{1}{2} tr(-\hat{\omega}_d R_d^T R + R_d^TR\hat{\omega}) \\
&= -\frac{1}{2} tr(R_d^TR(\hat{\omega} - R^TR_d\hat{\omega}_dR_d^TR)) \\
&= -\frac{1}{2} tr\left( \frac{1}{2}\left(R_d^TR - R^TR_d\right) \hat{e}_\omega\right) \\
&= e_R\cdot e_\omega.
\end{split}
\end{equation*}

$\psi_c$ is positive definite and quadratic within the sub level set $\psi_c^{-1}(\leq 2,I)$ which makes $\psi$ uniformly quadratic about the identity \cite{bullo2004geometric}. This implies there exist scalers $b_1$, $b_2$ such that $0 < b_1 \leq b_2$ and
\begin{equation*}
b_1\norm{e_R}^2 \leq \psi(R,R_d) \leq b_2\norm{e_R}^2 \quad \forall R,R_d \in \psi_c^{-1}(\leq 2,I).
\end{equation*}

%%%%%%%%%%%%%%%%%%%%%%%%%%%%%%%%%%%%%%%%%%%%%%%%%%%%%%%%%%%%%%%%%%%%%%%%%%%

\subsection{Attitude Tracking for Rigid body}
The tracking controller for the fuselage \cite{bullo2004geometric} based on the above error function is given by
\begin{equation} \label{eq:moment_des}
M_d = -k_R e_R - k_{\omega} e_{\omega} + \omega \times J\omega - J(\hat{\omega}R^TR_d\omega_d-R^TR_d \dot{\omega}_d)
\end{equation}
where $k_R$ and $k_\omega$ are positive constants, the third term cancels
the rotational dynamics, and the subsequent terms are the feedforward terms. The error dynamics for the rigid body can now be obtained by substituting the above desired moment, $M_d$, in \myref{eq:fuselage}, which results in
\begin{equation} \label{eq:att_err_dyn}
J\dot{e}_\omega = -k_Re_R - k_\omega e_\omega.
\end{equation}

The following theorem, taken from \cite{lee2010geometric}, shows exponential stability of the attitude tracking controller.

\begin{theorem}(Exponential stability of attitude error dynamics)
The control moment given in \myref{eq:moment_des} makes the equilibrium $(e_R,e_\omega) = (0,0)$ of tracking error dynamics defined in \myref{eq:att_err_dyn} exponentially stable for all initial conditions satisfying 
\begin{equation*}
k_R\ \psi(R(0),R_d(0)) + \frac{1}{2}\lambda_{max}(J)\norm{e_\omega (0)}^2 < 2k_R.
\end{equation*}
\end{theorem}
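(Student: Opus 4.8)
The plan is to follow the standard Lyapunov argument for geometric tracking on $SO(3)$, augmenting the natural energy-like function with a cross term so as to extract exponential (rather than merely asymptotic) decay. First I would establish forward invariance of the ``good'' sublevel set. Consider $\mathcal{V}_0 = k_R\,\psi + \frac{1}{2}e_\omega\cdot Je_\omega$. Differentiating and using $\dot\psi = e_R\cdot e_\omega$ together with the error dynamics \myref{eq:att_err_dyn} gives $\dot{\mathcal{V}}_0 = k_R(e_R\cdot e_\omega) + e_\omega\cdot(-k_Re_R - k_\omega e_\omega) = -k_\omega\norm{e_\omega}^2 \le 0$. Since $k_R\psi \le \mathcal{V}_0 \le \mathcal{V}_0(0)$ and the hypothesis forces $\mathcal{V}_0(0) < 2k_R$, it follows that $\psi(t) < 2$ for all $t\ge 0$; hence the trajectory never leaves $\psi_c^{-1}(\le 2, I)$ and the quadratic bounds $b_1\norm{e_R}^2 \le \psi \le b_2\norm{e_R}^2$ remain valid throughout.

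Next I would introduce the augmented Lyapunov function $\mathcal{V} = \mathcal{V}_0 + c\,e_R\cdot Je_\omega$ with a small constant $c>0$ to be fixed later. Writing $z = [\norm{e_R},\ \norm{e_\omega}]^\top$, I would sandwich $\mathcal{V}$ between two quadratic forms $z^\top M_1 z \le \mathcal{V} \le z^\top M_2 z$, where the off-diagonal entries of $M_1$ and $M_2$ come from bounding the cross term via $|e_R\cdot Je_\omega| \le \lambda_{max}(J)\norm{e_R}\norm{e_\omega}$ and the diagonal entries from the bounds on $\psi$ and the spectrum of $J$. For $c$ small enough, $M_1 \succ 0$, so $\mathcal{V}$ is positive definite and comparable to $\norm{e_R}^2 + \norm{e_\omega}^2$.

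The core computation is $\dot{\mathcal{V}}$. Reusing $\dot{\mathcal{V}}_0 = -k_\omega\norm{e_\omega}^2$ and differentiating the cross term with $\dot e_R = B(R_d^TR)e_\omega$ and $J\dot e_\omega = -k_Re_R - k_\omega e_\omega$ yields $\dot{\mathcal{V}} = -k_\omega\norm{e_\omega}^2 - ck_R\norm{e_R}^2 - ck_\omega(e_R\cdot e_\omega) + c\,e_\omega^\top B^\top J e_\omega$. Invoking the bound on $B(R_d^TR)$ noted above to control the last term and Young's inequality on the indefinite cross term, I would produce a matrix $W$ with $\dot{\mathcal{V}} \le -z^\top W z$; choosing $c$ sufficiently small renders $W$ positive definite, since the dominant terms $-ck_R\norm{e_R}^2$ and $-k_\omega\norm{e_\omega}^2$ survive while the $c$-order cross and flapping perturbations stay subordinate.

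Finally, combining the quadratic sandwich with $\dot{\mathcal{V}}\le -z^\top W z \le -\frac{\lambda_{min}(W)}{\lambda_{max}(M_2)}\mathcal{V}$ gives exponential decay $\mathcal{V}(t)\le \mathcal{V}(0)e^{-\alpha t}$ with $\alpha = \lambda_{min}(W)/\lambda_{max}(M_2)$, and hence $\norm{e_R(t)}^2 + \norm{e_\omega(t)}^2 \le \lambda_{min}(M_1)^{-1}\mathcal{V}(0)e^{-\alpha t}$, which is the claimed exponential stability. I expect the main obstacle to be the simultaneous tuning of $c$: it must be small enough to keep both $M_1$ and $W$ positive definite, and these two constraints, intertwined with the damping constant $k_\omega$ and the inertia spectrum, must be shown compatible. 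A secondary subtlety is that every quadratic estimate on $\psi$ is valid only inside $\psi_c^{-1}(\le 2, I)$, so the invariance established in the first step is precisely what makes the entire bounding scheme legitimate.
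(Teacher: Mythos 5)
Your proposal is correct and follows essentially the same route as the paper: the paper likewise augments the energy function with a cross term (it uses $\epsilon\, e_R\cdot e_\omega$ rather than your $J$-weighted $c\, e_R\cdot Je_\omega$, a cosmetic difference that only changes where $J^{-1}$ factors appear in $\dot{V}$), sandwiches the result between quadratic forms $M_1, M_2$, bounds $\dot V$ by $-z^\top W z$ with all three matrices made positive definite by taking the cross-term coefficient small, and proves invariance of $\psi_c^{-1}(\le 2, I)$ via the monotonicity of $k_R\psi + \frac{1}{2}e_\omega\cdot Je_\omega$ under the stated initial-condition bound --- exactly your $\mathcal{V}_0$. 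The only presentational difference is that you establish the sublevel-set invariance first (which is arguably the cleaner ordering, since the quadratic bounds on $\psi$ are only valid inside that set), whereas the paper defers this argument to the end of its proof.
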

\begin{proof}
Define a Lyapunov candidate function for the error dynamics \myref{eq:att_err_dyn} 
\begin{equation*}
V_1 = \frac{1}{2}e_\omega \cdot J e_\omega + k_R\psi(R,R_d) + \epsilon e_R\cdot e_\omega,
\end{equation*}
where $0 < \epsilon \in \mathbb{R}$. $V_1$ can be lower and upper bounded by 
\begin{equation*}
\begin{split}
\frac{1}{2}\lambda_{min}(J)\norm{e_\omega}^2 + k_Rb_1\norm{e_R}^2 - \epsilon \norm{e_R}\norm{e_\omega} \leq V_1 \\
\leq \frac{1}{2}\lambda_{max}(J)\norm{e_\omega}^2 + k_Rb_2\norm{e_R}^2 + \epsilon \norm{e_R}\norm{e_\omega}
\end{split}
\end{equation*}
resulting in the relation,
\begin{equation} \label{eq:V}
z_1^TM_1z_1 \leq V_1 \leq z_1^TM_2z_1,
\end{equation}
where $z_1 = [\norm{e_\omega} \norm{e_R}]$ and
\begin{equation*}
M_1 = \begin{bmatrix}
\frac{\lambda_{min}(J)}{2} & -\frac{\epsilon}{2} \\
-\frac{\epsilon}{2} & k_Rb_1 
\end{bmatrix}, M_2 = \begin{bmatrix}
\frac{\lambda_{max}(J)}{2} & \frac{\epsilon}{2} \\
\frac{\epsilon}{2} & k_Rb_2 
\end{bmatrix}.
\end{equation*} 
The time derivative of $V_1$ is given by
\begin{equation*}
\begin{split}
\dot{V}_1 &= e_\omega\cdot J\dot{e}_\omega + k_Re_R\cdot e_\omega + \epsilon \dot{e_R}\cdot e_\omega + \epsilon e_R\cdot\dot{e}_\omega \\
&= -k_\omega \norm{e_\omega}^2 + \epsilon B(R_d^TR)e_\omega \cdot e_\omega \\
& \quad - \epsilon k_R e_R\cdot J^{-1}e_R - \epsilon k_\omega e_R \cdot J^{-1} e_\omega.
\end{split}
\end{equation*}
$\dot{V}_1$ can be upper bounded by 
\begin{equation*}
\begin{split}
\dot{V}_1 \leq -k_\omega \norm{e_\omega}^2 + \epsilon \norm{e_\omega}^2 - \frac{\epsilon k_R}{\lambda_{max}(J)} \norm{e_R}^2 \\
+ \frac{\epsilon k_\omega}{\lambda_{min}(J)}\norm{e_\omega}\norm{e_R},
\end{split}
\end{equation*}
which can be written as,
\begin{equation} \label{eq:V_dot}
\dot{V}_1 \leq -z_1^TW_1z_1,
\end{equation}
where,
\begin{equation*}
W_1 = \begin{bmatrix}
k_\omega - \epsilon & -\frac{\epsilon k_\omega}{2\lambda_{min}(J)} \\
-\frac{\epsilon k_\omega}{2\lambda_{min}(J)} & \frac{\epsilon k_R}{\lambda_{max}(J)} 
\end{bmatrix}.
\end{equation*}
Choosing $\epsilon$ such that
\begin{equation*}
\epsilon < \min \left\{k_\omega, \sqrt{2k_Rb_1\lambda_{min}(J)}, \frac{4k_Rk_\omega \lambda_{min}(J)^2}{k_\omega^2\lambda_{max}(J) + 4k_R \lambda_{min}(J)^2 } \right\}
\end{equation*}
makes the matrices $M_1$, $M_2$ and $W_1$ positive definite. This makes $V_1$ quadratic from  \myref{eq:V} and $\dot{V}_1$ negative definite as long as the configuration error $R_e(t)=R_d^T(t)R(t)$ remains in the sub level set $\psi_c^{-1}(\leq 2,I)$. This is shown to be true in the following sequence of arguments. Consider a Lyapunov candidate $V_2 = \frac{1}{2}e_\omega \cdot J e_\omega + k_R\psi(R,R_d)$ for the attitude error dynamics. $\dot{V}_2 = -k_\omega \norm{e_\omega}^2 \leq 0$.  This guarantees
\begin{equation*}
\begin{split}
k_R\psi(R(t),R_d(t)) \leq k_R\psi(R(t),R_d(t)) + \frac{1}{2}e_\omega(t) \cdot J e_\omega(t) \\
\leq k_R\psi(R(0),R_d(0)) + \frac{1}{2}e_\omega(0) \cdot J e_\omega(0) < 2k_R \\
\implies \psi(R(t),R_d(t)) < 2 \quad \forall t>0.
\end{split}
\end{equation*} 
Therefore there exists positive constants $\alpha_1$, $\beta_1$ such that $\psi(t) \leq \min\{2,\alpha_1e^{-\beta_1t}\}$.

\end{proof}

%%%%%%%%%%%%%%%%%%%%%%%%%%%%%%%%%%%%%%%%%%%%%%%%%%%%%%%%%%%%%%%%%%%%%%%%%%%

\subsection{Attitude Tracking for Helicopter}

In this subsection we bring in the rotor dynamics which induces the desired torque computed in the previous part through a first order subsystem. The error between this  desired torque and the applied torque on the fuselage is denoted as $e_M \triangleq M - M_d$. We derive the error dynamics for the combined rotor-fuselage dynamics. Equation \myref{eq:fuselage} can be rewritten as
\begin{equation*} 
J\dot{\omega} + \omega\times J\omega = M_d + e_M.
\end{equation*} 
Using $M_d$ for rigid body tracking from \myref{eq:moment_des} gives the error dynamics for fuselage as
\begin{equation} \label{eq:er_fuselage}
J\dot{e}_\omega = -k_Re_R - k_\omega e_\omega + e_M.
\end{equation}
Taking the derivative of $e_M$ and using \myref{eq:rotor} leads to the following error dynamics for rotor 
\begin{equation} \label{eq:er_rotor}
\dot{e}_M = -Ae_M -AM_d - \dot{M}_d + u.
\end{equation}
Equations \myref{eq:er_fuselage} and \myref{eq:er_rotor} constitute the error dynamics for the rotor-fuselage system. It is clear that the resulting error dynamics has a strict-feedback form wherein $e_M$ acts as a virtual control input in \myref{eq:er_fuselage}. Therefore a backstepping approach can be used for controller synthesis \cite{khalil2014nonlinear}.
We claim that the rotor-fuselage error dynamics is locally exponentially stable if the control input $u$ is chosen to be 
\begin{equation} \label{eq:control_ip}
u = \dot{M}_d + AM_d - e_\omega -\epsilon J^{-1}e_R.
\end{equation}
In the above expression, the derivative of the desired control moment is obtained by differentiating \myref{eq:moment_des},
\begin{equation} \label{eq:moment_dot}
\begin{split}
\dot{M}_d = -k_R\dot{e}_R - k_\omega\dot{e}_\omega + \dot{\omega}\times J\omega + \omega \times J\dot{\omega}  - J(\dot{\hat{\omega}}R^TR_d\omega_d \\
 - \hat{\omega}^2R^TR_d\omega_d + 2\hat{\omega}R^TR_d\dot{\omega}_d - R^TR_d\hat{\omega}_d\dot{\omega}_d - R^TR_d\ddot{\omega}_d).
\end{split}
\end{equation}

\begin{proposition}(Exponential stability of rotor-fuselage error dynamics)
The control input given in \myref{eq:control_ip} renders the equilibrium $(e_R,e_\omega,e_M) = (0,0,0)$ of the rotor fuselage error dynamics exponentially stable for all initial conditions satisfying
\begin{equation*}
\begin{split}
k_R\psi(R(0),R_d(0)) + \frac{1}{2}\lambda_{max}(J)\norm{e_\omega (0)}^2 + \frac{1}{2}\norm{e_M(0)}^2 \\
+ \epsilon\norm{e_R(0)}\norm{e_\omega(0)} < 2k_R.
\end{split}
\end{equation*}
\end{proposition}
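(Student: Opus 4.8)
The plan is to close the loop first and then build a composite Lyapunov function by augmenting the function $V_1$ of Theorem 1 with a quadratic penalty on the moment error $e_M$, exactly as backstepping prescribes. Substituting the control law \myref{eq:control_ip} into the rotor error dynamics \myref{eq:er_rotor} cancels the terms $\dot{M}_d + AM_d$ and yields the clean closed loop $\dot{e}_M = -Ae_M - e_\omega - \epsilon J^{-1}e_R$. Together with \myref{eq:er_fuselage} and $\dot{e}_R = B(R_d^TR)e_\omega$ this gives the full error system in the three variables $(e_R,e_\omega,e_M)$, which I would then analyze with $V = V_1 + \frac{1}{2}\norm{e_M}^2$, keeping the same $\epsilon$ fixed in Theorem 1.

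First I would verify that $V$ is quadratic: stacking $z = [\norm{e_\omega}\ \norm{e_R}\ \norm{e_M}]$, the bounds of Theorem 1 extend to $z^TN_1z \le V \le z^TN_2z$, where $N_1,N_2$ are block diagonal with the earlier $M_1,M_2$ in the leading block and $\frac{1}{2}$ in the last entry, so their positive definiteness is inherited from $M_1,M_2$. The crucial computation is $\dot{V} = \dot{V}_1 + e_M\cdot\dot{e}_M$. Here the two feedback terms $-e_\omega$ and $-\epsilon J^{-1}e_R$ in \myref{eq:control_ip} are designed precisely so that the coupling terms $e_\omega\cdot e_M$ and $\epsilon e_R\cdot J^{-1}e_M$ generated through $J\dot{e}_\omega$ are exactly cancelled by $e_M\cdot\dot{e}_M$. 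What remains is $\dot{V}_1$ together with $-e_M\cdot Ae_M$, so that after the same bounding as in Theorem 1 one obtains $\dot{V} \le -z^TWz$ with $W$ block diagonal: the leading block is the matrix $W_1$ and the last entry is $\lambda_{min}(A) > 0$. Since $A$ is positive definite and $W_1 \succ 0$ for the chosen $\epsilon$, no new gain condition is required and $W \succ 0$.

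The step I expect to be the main obstacle is establishing forward invariance of the sublevel set $\psi_c^{-1}(\le 2,I)$, which is what guarantees $B < 1$ and the quadratic bounds along the trajectory. Unlike Theorem 1, the natural auxiliary function $\frac{1}{2}e_\omega\cdot Je_\omega + k_R\psi + \frac{1}{2}\norm{e_M}^2$ no longer has a sign-definite derivative, because the term $-\epsilon e_M\cdot J^{-1}e_R$ survives; so the clean comparison used there does not carry over. I would instead argue directly with $V$. The hypothesis of the proposition is precisely an upper bound for $V(0)$, obtained by bounding each term of $V(0)$ individually, and it asserts $V(0) < 2k_R$. A boundary argument then closes the gap: as long as $\psi < 2$ one has $\dot{V} < 0$, hence $V(t) \le V(0) < 2k_R$; and were $\psi$ to reach the value $2$ at some first time, then $R_d^TR$ would be a rotation by $\pi$ and therefore symmetric, forcing $e_R = 0$ there and hence $V \ge 2k_R$, a contradiction. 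Thus $\psi(t) < 2$ for all $t$, the set is invariant, and with $N_1,N_2,W \succ 0$ the standard Lyapunov estimate $\dot{V} \le -(\lambda_{min}(W)/\lambda_{max}(N_2))\,V$ yields exponential decay of $\norm{z(t)}$, and therefore of $(e_R,e_\omega,e_M)$, to the origin.
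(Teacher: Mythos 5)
Your proposal is correct, and its core --- the composite function $V = V_1 + \tfrac{1}{2}\norm{e_M}^2$, the exact cancellation of the cross terms $e_\omega\cdot e_M$ and $\epsilon\, e_R\cdot J^{-1}e_M$ by the feedback terms $-e_\omega - \epsilon J^{-1}e_R$ in \myref{eq:control_ip}, and the resulting block-diagonal $W$ with $W_1$ and $\lambda_{min}(A)$ --- is exactly the paper's proof. Where you genuinely depart from the paper is the forward-invariance step, and your version is the more careful one. The paper transplants the comparison-chain argument of Theorem 1: it writes $k_R\psi(t) \leq k_R\psi(t) + \tfrac{1}{2}e_\omega\cdot Je_\omega + \tfrac{1}{2}\norm{e_M}^2 + \epsilon\norm{e_R}\norm{e_\omega}$ and then asserts this augmented quantity is bounded by its initial value; but that quantity is not $V$ (it differs by the Cauchy--Schwarz slack in the cross term), and its monotonicity does not follow from $\dot{V}\leq 0$ --- nor does $k_R\psi \leq V$ hold pointwise, since $\epsilon e_R\cdot e_\omega$ can be negative. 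You identify precisely why the Theorem 1 trick fails here (the indefinite $-\epsilon\, e_M\cdot J^{-1}e_R$ term in the derivative of the cross-term-free auxiliary function), read the hypothesis as a Cauchy--Schwarz upper bound on $V(0)$, and close the invariance argument with a first-crossing contradiction: at the first time $\psi = 2$, the configuration error is a rotation by $\pi$, hence symmetric, so $e_R = 0$ and $V \geq 2k_R$, contradicting $V(t) \leq V(0) < 2k_R$. This buys a rigorous justification of the step the paper glosses over, and it also explains transparently why the proposition's hypothesis, unlike Theorem 1's, must carry the extra $\epsilon\norm{e_R(0)}\norm{e_\omega(0)}$ term. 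One shared loose end (in both your argument and the paper's): the final exponential estimate needs a uniform bound $\sup_t \psi(t) \leq c < 2$ so that the quadratic upper bound $\psi \leq b_2 \norm{e_R}^2$ holds with a fixed $b_2$; pointwise $\psi(t)<2$ alone does not supply this, though your boundary argument extends to give it (a sequence with $\psi(t_n)\to 2$ would force $\norm{e_R(t_n)}\to 0$ and hence $\liminf V(t_n)\geq 2k_R$, the same contradiction).
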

\begin{proof}
Consider the following Lyapunov candidate for combined rotor-fuselage error dynamics
\begin{equation*}
\begin{split}
V &= V_1 + \frac{1}{2}\norm{e_M}^2 \\
&= \frac{1}{2}e_\omega \cdot J e_\omega + k_R\psi(R,R_d) + \epsilon e_R\cdot e_\omega + \frac{1}{2}\norm{e_M}^2.
\end{split}
\end{equation*}
$V$ is quadratic within the sub level set $\psi_c^{-1}(\leq 2,I)$ since $V_1$ is quadratic in the same set. 
The time derivative of $V$ is given by
\begin{equation*}
\begin{split}
\dot{V} &= e_\omega\cdot(-k_\omega e_\omega + e_M) + \epsilon \dot{e}_R\cdot e_\omega + \epsilon e_R\cdot J^{-1}(-k_Re_R \\ 
&-k_\omega e_\omega + e_M) + e_M\cdot \dot{e}_M\\
&= -k_\omega\norm{e_\omega}^2 + \epsilon  B(R_d^TR)e_\omega\cdot e_\omega - \epsilon k_R e_R\cdot J^{-1}e_R \\ &- \epsilon k_\omega e_R\cdot J^{-1}e_\omega \\ 
&+ e_M\cdot (\epsilon J^{-1}e_R + e_\omega -Ae_M -\dot{M}_d -AM_d + u) \\
\end{split}
\end{equation*}
From the previous subsection on rigid body tracking, the first four terms in the above expression have been rendered negative definite \myref{eq:V_dot}. By substituting $u$ from \myref{eq:control_ip} we get,
\begin{equation*}
\dot{V} = \dot{V}_1 - e_M\cdot Ae_M \leq -z_1^TW_1z_1 - e_M\cdot Ae_M  \leq -z^TWz,
\end{equation*}
where $z = [\norm{e_\omega} \norm{e_R} \norm{e_M}]^T$ and
\begin{equation*}
W = \begin{bmatrix}
k_\omega - \epsilon & -\frac{\epsilon k_\omega}{2\lambda_{min}(J)} & 0 \\
-\frac{\epsilon k_\omega}{2\lambda_{min}(J)} & \frac{\epsilon k_R}{\lambda_{max}(J)} & 0 \\
0 & 0 & \lambda_{min}(A) 
\end{bmatrix}
\end{equation*}
$V(t)$ remains quadratic when $R_e(t)=R_d(t)^TR(t)$ lies in the sub level set $\psi_c^{-1}(\leq 2,I)$. This is true since 
\begin{equation*}
\begin{split}
k_R\psi(R(t),R_d(t)) \leq k_R\psi(R(t),R_d(t)) + \frac{1}{2}e_\omega(t) \cdot J e_\omega(t) \\
+ \frac{1}{2}\norm{e_M(t)}^2 + \epsilon\norm{e_R(t)}\norm{e_\omega(t)} \\
\leq k_R\psi(R(0),R_d(0)) + \frac{1}{2}e_\omega(0) \cdot J e_\omega(0) \\
+ \frac{1}{2}\norm{e_M(0)}^2  + \epsilon\norm{e_R(0)}\norm{e_\omega(0)} < 2k_R \\
\implies \psi(R(t),R_d(t)) < 2 \quad \forall t>0.
\end{split}
\end{equation*}
$V(t)$ is positive definite, quadratic and $\dot{V}(t)$ is negative definite, therefore there exists positive scalars $\alpha$ and $\beta$ such that $\psi(t) \leq \min\{2, \alpha e^{-\beta t}\}$.

\end{proof}

Equation \myref{eq:moment_dot} implies that a feasible attitude reference trajectory must have a continuous second derivative of the angular velocity $\ddot{\omega}_d$ for a continuous control input. It is assumed that the fuselage body frame angular acceleration $\dot{\omega}$ is available for feedback. The flap angles $(a,b)$, which are difficult to measure, are not required for implementing the controller.

%%%%%%%%%%%%%%%%%%%%%%%%%%%%%%%%%%%%%%%%%%%%%%%%%%%%%%%%%%%%%%%%%%%%%%%%%%%
%%%%%%%%%%%%%%%%%%%%%%%%%%%%%%%%%%%%%%%%%%%%%%%%%%%%%%%%%%%%%%%%%%%%%%%%%%%

\section{Simulation Results}

The tracking controller given by \myref{eq:control_ip} was simulated for a 10 kg class model helicopter whose parameters are given in Table \ref{tab:heli_params}. The helicopter was given an initial attitude of 150 deg in roll angle and 57 deg/s of roll-rate and was subjected to a sinusoidal roll angle input with an amplitude of twenty degree and a frequency of one Hertz. Fig. \ref{fig:roll_response} shows the response and is evident that the controller is able to converge to reference command within one second. The controller is able to track the desired roll attitude with maximum flap deflection of $\pm$ $0.87$ degrees as shown in Fig. \ref{fig:flap_angles}. As expected, the longitudinal tilt of the rotor remains unchanged at zero as the maneuver simulated has purely lateral motion.

\begin{figure} [!htbp]
\centering
  \includegraphics[width=0.9\linewidth]{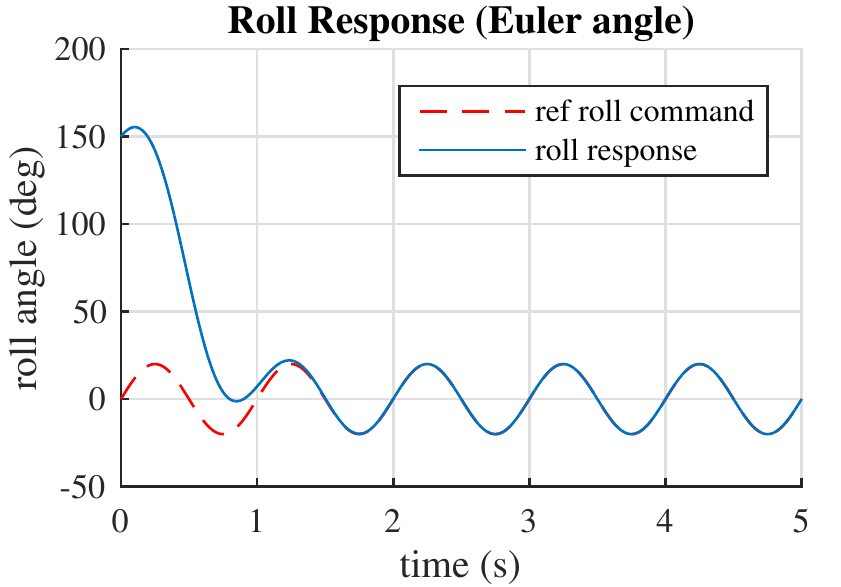}
  \caption{Roll tracking response.}
\end{figure}
\begin{figure} [!htbp]
  \centering
  \includegraphics[width=0.9\linewidth]{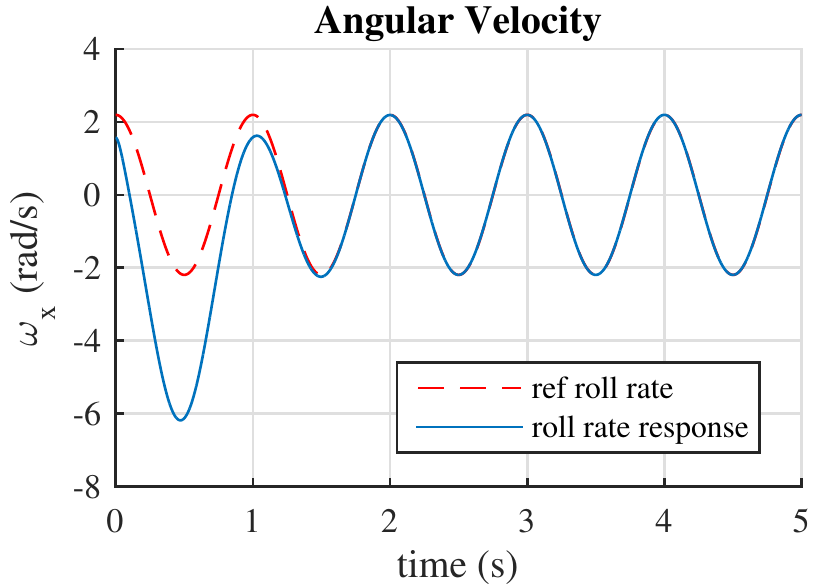}
  \caption{Roll rate tracking response.}
\label{fig:roll_response}
\end{figure}
\begin{figure} [!htbp]
\centering
  \includegraphics[width=0.9\linewidth]{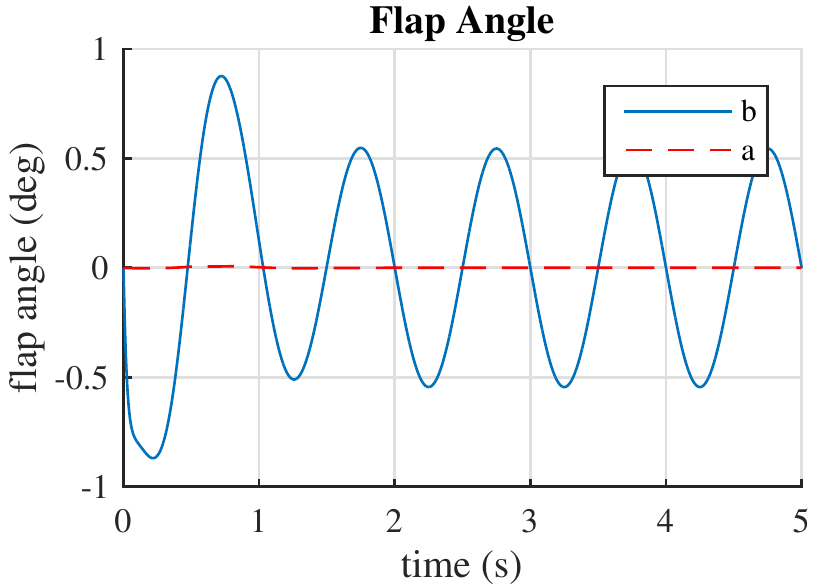}
  \caption{Required flap angle for tracking.}
  \label{fig:flap_angles}
\end{figure}

\begin{table} [h!]
\centering
\begin{tabular}{|c| c|} 
 \hline
 Parameter & Values  \\
 \hline
 $[J_{xx} J_{yy} J_{zz}]$ ($kg$-$m^2$) & [0.095 0.397 0.303]  \\ 
 \hline
 $\tau_m$ ($s$) & 0.06  \\
 \hline
 $k_\beta$ ($N$-$m$-$rad^{-1}$) & 129.09  \\
 \hline
 $h$ ($m$) & 0.174 \\
 \hline
 $K_\beta$ ($N$-$m$-$rad^{-1}$) & 137.7 \\ [1ex] 
 \hline
\end{tabular}
\caption{Helicopter parameters}
\label{tab:heli_params}
\end{table}

%%%%%%%%%%%%%%%%%%%%%%%%%%%%%%%%%%%%%%%%%%%%%%%%%%%%%%%%%%%%%%%%%%%%%%%%%%%
%%%%%%%%%%%%%%%%%%%%%%%%%%%%%%%%%%%%%%%%%%%%%%%%%%%%%%%%%%%%%%%%%%%%%%%%%%%

\section{Experimental Results}
The proposed controller was validated on an instrumented 10 kg class small scale conventional helicopter which consists of a single main rotor and a tail rotor. The instrumented helicopter is shown in Fig. \ref{fig:helicopter}. The main rotor of diameter 1.4 meter operates at 1500 rpm. The lateral and longitudinal control moment is produced by tilting the swashplate using three servos. Yawing moment is generated by changing the collective pitch of tail rotor. The helicopter has a stiff rotor hub (large $k_\beta$) which makes it extremely agile.

\begin{figure}[!htbp]
\centering
  \includegraphics[width=1.0\linewidth]{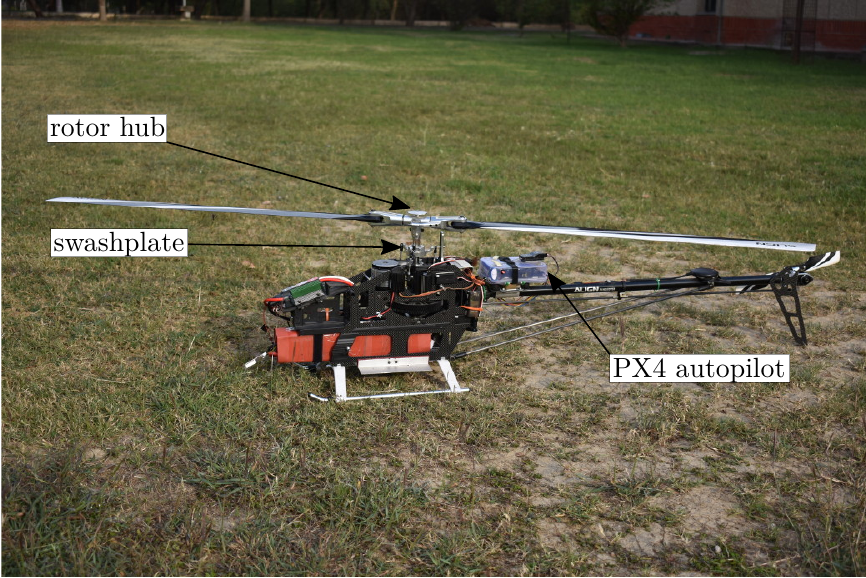}
  \caption{Experimental helicopter.}
  \label{fig:helicopter}
\end{figure}

The controller was implemented on PX4 autopilot hardware. It consists of a suite of sensors, namely 3-axis accelerometer, 3-axis gyroscope, 3-axis magnetometer, a GPS receiver and a barometer which together constitute the attitude heading reference system (AHRS). The autopilot software is based on PX4 flight stack which has a modular design and runs on top of a real-time operating system (NuttX). The autopilot comes with an EKF based attitude estimator. The proposed attitude controller was added as a module and runs at 250 Hz.

For validation purpose the helicopter was excited about roll/lateral axis. This axis of excitation was chosen as it eases the pilot to keep track of translation motion which is not the case with pitch/longitudinal movement. The input reference signal was a superposition of manual pilot input and autopilot generated sinusoidal roll reference input of $\pm$20 degree at 1 Hz. The manual input was superimposed as a correction so as to keep the translational motion of helicopter within a safe region.  The performance of the attitude controller was found to be satisfactory as seen in the linked video \cite{exp_youtube_link}. The error in tracking can be attributed to uncertainty in model structure and parameters.

\begin{figure}[!htbp]
\centering
  \includegraphics[width=0.9\linewidth]{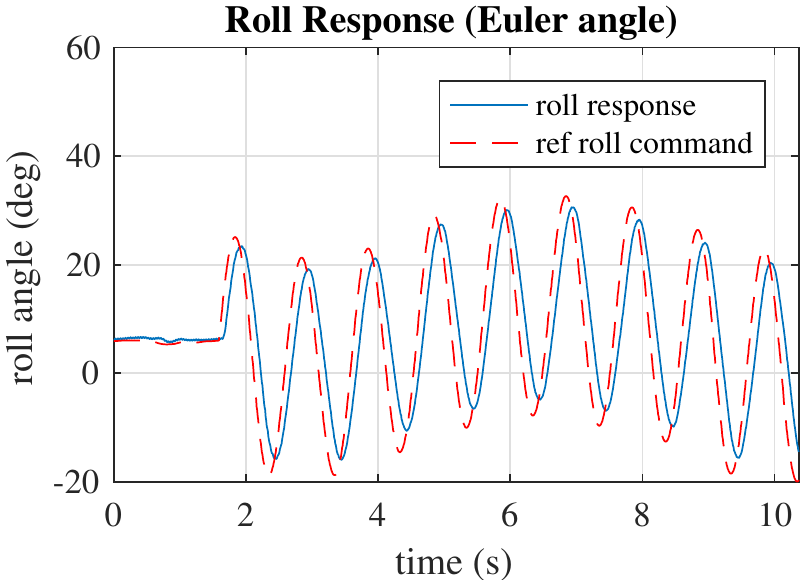}
  \caption{Experimental roll tracking response.}
  \label{fig:exp_roll_response}
\end{figure}

\section{Conclusion}
To the best of the knowledge of the authors, this work is the first attempt to integrate geometric control theory for the purpose of synthesizing an attitude tracking controller for a small-scale aerobatic helicopter, preserving the significant dynamics of the system while doing so.  The control law was validated in simulation and experiment on a 10 kg class small scale helicopter. The results as seen through the experimental validation are very encouraging.

\section{Acknowledgements}
Nidhish Raj and Ravi N Banavar acknowledge with pleasure, the support, and the conducive and serene
surroundings of IIT-Gandhinagar, where most of the theoretical work for this effort was 
done.

\nocite{*}
\bibliographystyle{unsrt}
\bibliography{ref}

\begin{thebibliography}{10}

\bibitem{gavrilets2001aggressive}
Vladislav Gavrilets, Emilio Frazzoli, Bernard Mettler, Michael Piedmonte, and
  Eric Feron.
\newblock Aggressive maneuvering of small autonomous helicopters: A
  human-centered approach.
\newblock {\em The International Journal of Robotics Research},
  20(10):795--807, 2001.

\bibitem{abbeel2010autonomous}
Pieter Abbeel, Adam Coates, and Andrew~Y Ng.
\newblock Autonomous helicopter aerobatics through apprenticeship learning.
\newblock {\em The International Journal of Robotics Research},
  29(13):1608--1639, 2010.

\bibitem{gerig2008modeling}
Marco~Beat Gerig.
\newblock {\em Modeling, guidance, and control of aerobatic maneuvers of an
  autonomous helicopter}.
\newblock PhD thesis, ETH ZURICH, 2008.

\bibitem{mettler2013identification}
Bernard Mettler.
\newblock {\em Identification modeling and characteristics of miniature
  rotorcraft}.
\newblock Springer Science \& Business Media, 2013.

\bibitem{hall1973inclusion}
WE~Hall~Jr and AE~Bryson~Jr.
\newblock Inclusion of rotor dynamics in controller design for helicopters.
\newblock {\em Journal of Aircraft}, 10(4):200--206, 1973.

\bibitem{takahashi1994h}
Marc~D Takahashi.
\newblock H-infinity helicopter flight control law design with and without
  rotor state feedback.
\newblock {\em Journal of Guidance, control, and Dynamics}, 17(6):1245--1251,
  1994.

\bibitem{ingle1994effects}
Steven~J Ingle and Roberto Celi.
\newblock Effects of higher order dynamics on helicopter flight control law
  design.
\newblock {\em Journal of the American Helicopter Society}, 39(3):12--23, 1994.

\bibitem{panza2014rotor}
Simone Panza and Marco Lovera.
\newblock Rotor state feedback in helicopter flight control: robustness and
  fault tolerance.
\newblock In {\em Control Applications (CCA), 2014 IEEE Conference on}, pages
  451--456. IEEE, 2014.

\bibitem{ahmed2009flight}
Bilal Ahmed and Hemanshu~R Pota.
\newblock Flight control of a rotary wing uav using adaptive backstepping.
\newblock In {\em Control and Automation, 2009. ICCA 2009. IEEE International
  Conference on}, pages 1780--1785. IEEE, 2009.

\bibitem{tang2015attitude}
Shuai Tang, QiuHui Yang, ShaoKe Qian, and ZhiQiang Zheng.
\newblock Attitude control of a small-scale helicopter based on backstepping.
\newblock {\em Proceedings of the Institution of Mechanical Engineers, Part G:
  Journal of Aerospace Engineering}, 229(3):502--516, 2015.

\bibitem{bullo2004geometric}
Francesco Bullo and Andrew~D Lewis.
\newblock {\em Geometric control of mechanical systems: modeling, analysis, and
  design for simple mechanical control systems}, volume~49.
\newblock Springer Science \& Business Media, 2004.

\bibitem{koditschek1989application}
Daniel~E Koditschek.
\newblock The application of total energy as a lyapunov function for mechanical
  control systems.
\newblock {\em Contemporary Mathematics}, 97:131, 1989.

\bibitem{maithripala2006almost}
DH~Sanjeeva Maithripala, Jordan~M Berg, and Wijesuriya~P Dayawansa.
\newblock Almost-global tracking of simple mechanical systems on a general
  class of lie groups.
\newblock {\em IEEE Transactions on Automatic Control}, 51(2):216--225, 2006.

\bibitem{lee2010geometric}
Taeyoung Lee, Melvin Leoky, and N~Harris McClamroch.
\newblock Geometric tracking control of a quadrotor uav on se (3).
\newblock In {\em Decision and Control (CDC), 2010 49th IEEE Conference on},
  pages 5420--5425. IEEE, 2010.

\bibitem{khalil2014nonlinear}
Hassan~K Khalil.
\newblock {\em Nonlinear control}.
\newblock Prentice Hall, 2014.

\bibitem{exp_youtube_link}
Experimental roll tracking response.
\newblock \url{https://www.youtube.com/watch?v=VXvE8ElmFp4}.

\end{thebibliography}
\end{document}